\newtheorem{definition}{Definition}[section] 
\newtheorem{theorem}{Theorem}[section]
\newtheorem{lemma}{Lemma}[section]
\begin{document}
%
\title{Non-cooperative game approach for task offloading in edge clouds}
%
%
%
%

\author{Bo~Yang,~\IEEEmembership{Member,~IEEE},
        Zhiyong~Li,~\IEEEmembership{Member,~IEEE}, and Wenbin~Liu
\IEEEcompsocitemizethanks{\IEEEcompsocthanksitem Bo Yang and Wenbin Liu are with the College of Information and Management of Hunan University of Finance and Economics, Changsha, China, 410205.\protect\\
E-mail:\{bo\_yang, zhiyong.li\}@hnu.edu.cn.
\IEEEcompsocthanksitem  Bo Yang and Zhiyong Li are with the College of Computer Science and Electronic Engineering of Hunan University, National Supercomputing Center in Changsha, Key Laboratory for Embedded and Network Computing of Hunan Province, Changsha, China, 410082.
 }
\thanks{Manuscript received xx xx, 2017; revised July xx, xxxx.}}

%
%

\markboth{Journal of \LaTeX\ Class Files,~Vol.~ , No.~ , September~ }%
{Shell \MakeLowercase{\textit{et al.}}: Bare Advanced Demo of IEEEtran.cls for Journals}
%



\IEEEtitleabstractindextext{%
\begin{abstract}
Task offloading provides a promising way to enhance the capability of the mobile terminal (also called terminal user) that is distributed on network edge and communicates edge clouds with wireless. Generally, there are multiple edge cloud nodes with distinct processing capability in a geographic area, which can offer computing service for various terminal users. Furthermore, the terminal users are competitive and selfish, i.e., each user takes into account only maximizing her own profit, while conducting task offloading strategies. In this paper, we focus on the resource management optimization for edge clouds, and formulate the problem of resource competition among terminal users as a non-cooperative game, in which the terminal user who acts as the player always pursues the minimization of the expected response time for her tasks by optimizing allocation strategies. We present the utility function of the user with queuing theory, and then prove the existence of Nash equilibrium for the formulated game. Using the concept of Nash bargaining solution to calculate the optimal task offloading scheme for the user, we propose a distributed task offloading algorithm with low computation complexity. The results of simulated experiments demonstrate that our method can quickly reach the Nash equilibrium point, and deliver satisfying performance at the expected response time of the user's tasks.
\end{abstract}

\begin{IEEEkeywords}
Edge cloud, expected response time, game theory, Nash bargaining solution, task offloading
\end{IEEEkeywords}}

\maketitle

\IEEEdisplaynontitleabstractindextext

%
\IEEEpeerreviewmaketitle

\ifCLASSOPTIONcompsoc
\IEEEraisesectionheading{\section{Introduction}\label{sec:introduction}}
\else
\section{Introduction}
\label{sec:introduction}
\fi

%
%
%
%
\subsection{Motivation}

\IEEEPARstart{W}ith the development of high-powered mobile terminals and 5G communication technologies, mobile computing has received extensive attention. In this scenario, the applications of the mobile user often are offloaded to the cloud due to the capability constraint of the terminal. For example, to quickly respond the computing-intensive speech recognition, Google Voice Search and Apple Siri always request computing resources from cloud data centers\cite{Liu2013Gearing}. However, they might suffer the risk of uncertain delay caused by the network while the user's tasks are transmitted to the remote cloud data center via multiple network nodes. To address the problem, the edge cloud computing (ECC) model has been emerged. Unlike traditional cloud computing models, it transfers a number of computing resources from the cloud to network edges closest to the user, allowing time-sensitive tasks to be processed nearby and avoiding the uncertainty of transmission delay \cite{Nguyen2017Resource}.

Due to the advantage of the proximate access for the mobile user, ECC has been widely agreed to be a key technology for the mobile computing. However, compared to traditional large-scale cloud computing, an ECC node is usually equipped with only small or medium size computing resources because of the limit of the physical space. Hence, it cannot meet all users when the number of user requests is relatively large. In fact, a mobile user can obtain identical service from various ECC nodes in a physical field, whereas a ECC node can also provide service for multiple users. Accordingly, it is inevitable to consider the load balancing among ECC nodes, so that the system can provide better experience for the users. The existing researches for ECC mainly focused on the task offloading between the mobile device and the edge cloud node, such as user application partitioning and offloading \cite{Wang2017Online, Huang2012A}, access control\cite{Chen2015, Hoang2012Optimal} and so on. Although these researches in these areas are very important, they did not taken into account both of the resource competition among users and the queuing delay of user tasks in ECC systems.

Without loss of generality, most of the mobile users in the ECC system are time-sensitive and independent from each other. The objective of the user is to minimize the expected response time of her tasks while she makes decisions for task offloading.
However, to capture this objective, there are several significant points that need further consideration.  For examples, the users of ECC systems are free to act independently in a selfish manner, such that achieving satisfactory experience for users is difficult due to no central authority in the system; the response time of the user task not only refers to its running time and transmission delay, but also includes its queuing delay on the ECC node. Notably, the processing capability of the mobile terminal is far lower than that of the edge cloud node in fact. Nevertheless, the mobile users can also execute their own part of tasks locally when external computing resources provided by the edge cloud node cannot meet their requirements\cite{Wang2017Online,Chen2015}. However, currently task offloading strategies completely considering above scenarios have not been well studied. Therefore, it is still a challenge to design an efficient, stable and distributed method of task offloading in the ECC system.

\subsection{Our contributions}

Game theory can provide a natural paradigm to design decentralized mechanisms \cite{Myerson2013}, which can help obtain an in-depth analytical understanding of the task allocation problem of ECCs. Accordingly, in this study, we introduce the game theoretic approach to address the aforementioned issues. In our model, ECC users, who act as players, compete with one another in sharing resources distributed in multiple ECC nodes in a physical field. To minimize the average response time of tasks, they can decide which ECC nodes will process their requirements by observing the running scenario, such as the available computing resources, the transmitting delay and the service levels of their requirements. Our main contributions are summarized as follows.
	
Our schemes consider the expected response time that includes both of the queueing time and the transmitting delay for tasks. We establish a queueing model to characterize the computing node in the ECC system. The task offloading problem of each user is viewed as a centralized problem in our schemes, because each mobile user independently acts with a selfish manner. Furthermore, to ease the difficulty of solving the problem, Nash bargaining solution (NBS) method is adopted to calculate the optimal strategy for the mobile user.

Additionally, we propose a non-cooperative game framework for the mobile edge cloud system, in which each mobile user can selfishly minimize her payoff by making optimal strategy for her task offloading with the NBS method. We prove the existence of the Nash equilibrium of the proposed game. Moreover, corresponding algorithms are given to find the best response of the mobile user, and can converge to an efficient equilibrium after several iterations. The experimental results demonstrate that our approaches are effective and efficient.

The rest of this paper is organized as follows. We first discuss related works in Section 2, and then introduce the system model in Section 3. We provide the formulation of the resources allocation problem and present the detailed description of the proposed algorithms in Sections 4 and 5, respectively. We discuss the simulation results that demonstrate the effectiveness of our approach in Section 6. Finally, the conclusions are drawn in Section 7.

\section{Related Work}

In recent years, due to the improvement of communication technology and mobile terminals, more resources of cloud computing have been transferred to the network edge closed to customers. The existing researches on ECCs are mainly focus on the problem model and resource management of the system \cite{Mao2017}. In this section, we will discuss the associated work from following aspects: the task partition, and the task offloading among multiple users in ECCs.

Generally, a mobile computing task is consist of multiple procedures, such as the augment reality (AR) task, which has five critical procedures: the video collection, the tracker, the mapper, the object recognizer, and the renderer. Among these procedures, the computation-intensive procedures, i.e., the tracker, mapper and object recognizer can be offloaded for cloud execution, others can be performed locally \cite{Mao2017}. In this way, the mobile user can enjoy various benefits from ECCs using distinct partition algorithms. Hence, a series of task offloading approaches are proposed to optimize the system performance\cite{Wang2017Online,Yang2015Multi,Xiang2014Ready,Chang2014Bringing}.

Ref\cite{Wang2017Online} focused on the upper bound of the system performance, and investigated the problem of the task partition and placement, in which the user's task and the physical computing system are mapped into two types of nodes that are labelled as the requirement and the available resource in a graph respectively, and then an approximate online algorithm was proposed to solve the optimal matching for the two types of nodes in this graph. Considering the scenario where the available resources are constrained in ECCs, authors of Ref\cite{Yang2015Multi} investigated the task partition methods that minimize the expected time of user tasks. Xiang \emph{et~al.} proposed a consolidated method to coordinate the task partition while conducting the task offloading strategy. In their work, saving energy is realized by reducing the time of the net interface working at high power\cite{Xiang2014Ready}. Chang \emph{et al.} studied the task placement approach for the application of Internet of Thing in the edge cloud, where the time-sensitive components of the task are  executed on the edge cloud node by splitting the task seamlessly\cite{Chang2014Bringing}.

Above investigations primarily focused on minimizing energy consumption and time delay, or considering the tradeoff of them. In fact, the multi-step task partitioning and placement in ECCs still encounter numerous challenges even if it is a simple task partitioning and placement. Moreover, these investigations mostly used heuristic algorithms to solve such problems\cite{Taleb2017}. It is well known that these methods generally have high computing complexity, and hence can not guarantee the performance. Furthermore, above studies neglected the task congestion or queuing delay, i.e., the user task arriving on an edge computing node may have to wait for a certain time before it can be executed. It would significantly harm the user's experience if the waiting time is too long.

For the scenario of multiple users, the popular schemes employed by existing studies for the resource management in ECCs can be classified into two types: centralization and decentralization. In these investigations\cite{Chen2016Joint,You2017Energy,Chen2017Joint} with centralized strategies, the edge cloud node has the completed information of all users, such as the resource request of the user, makes decisions for resource allocation and then sends them to users. Ref\cite{You2017Energy} studied resource allocation for a multiuser mobile edge cloud system, where the optimal resource allocation is formulated as a convex optimization problem for minimizing the weighted sum mobile energy consumption under the constraint on computation latency. Chen {\em et al.} aimed to minimize the overall cost of energy, computation and delay for all users, and formulated the optimization of the offloading decision as a non-convex quadratically constrained quadratic program\cite{Chen2016Joint}. Their work is further extended in the literature\cite{Chen2017Joint}, e.g., the computation resource allocation and processing cost were taken into account.

For the investigations with decentralization approaches, Sardellitti {\em et al.} formulated the offloading problem as a joint optimization of the radio resources and the computational resources, and provided a distributed resource scheduling algorithm using successive convex approximation technique to minimize the overall users' energy consumption, while meeting latency constraints\cite{Sardellitti2015Joint}. Similarly, authors of literature\cite{Guo2016Energy} proposed a distributed resource scheduling algorithm to reduce energy consumption and shorten application completion time in mobile edge clouds. However, the above studies did not considered the queuing delay and the user selfishness in the scenario of multiple users.

Moreover, the non-cooperative game approach, as a type of decentralized method, has obtained extensive attentions in current mobile edge computing fields. Ref\cite{Chen2015,Chen2016Efficient} assumed that the user tasks can run on the local terminal or the edge cloud, modelled the competing traffic channel among multiple users as a non-cooperative game, and proposed a task offloading approach being able to reach Nash equilibrium (NE). Cardellini et.al. investigated a scenario in which multiple non-cooperative users share the limited computing resources of close-by cloudlets in a fixed field and can selfishly determine to assign their computations to any of the three tiers, e.g., a local tier of mobile terminals, a middle tier of ECCs, and a remote tier of cloud servers\cite{Cardellini2016A}. Nevertheless, their work is for a single mobile edge cloud, and ignored a fact that there are multiple mobile edge nodes that can simultaneously offer services for users in a physical area. Considering a scenario that exists multiple users competing sharing computation resources provisioned by multiple ECCs in a field, Li {\em et al.} employed the queuing model to characterize the mobile edge cloud server, and proposed a non-cooperative game algorithm to find the optimal computation offloading strategy for the mobile users and multiple ECCs\cite{keqinli2018}. Although similar technologies are used in our work, the difference from their work is that we combine the Nash Bargain Solution (NBS) and the concept of non-cooperative game to calculate the optimal allocation scheme of the user, such that the analytic solution of the task allocation for each user can be quickly obtained.

\section{System  Model}
In this section, we introduce the models for the mobile edge cloud and user respectively. For the convenience of the readers, the major notations used in this paper are listed in Table 1.

\begin{table}[htbp] 
\centering  
\caption{Notations} 
\begin{tabular}{lp{7cm}}  
\toprule[1.0pt]
Symbol &Meaning\\
\hline  
$i$ &Subscript of the mobile user or terminal \\
$MU_i$ &Mobile user or terminal $i$\\
$j$ &Subscript of the computing node(e.g., edge cloud node, local terminal)\\
$\mathcal{M}$ &Set of the edge cloud node\\
$m$& The number of edge cloud nodes\\
$\mathcal{N}$ &Set of the mobile user\\
$ECN_j$ &Edge cloud node $j$\\
$\hat{\mu_i}$ &Available processing rate of $MU_i$\\
$\hat{\lambda}_i$ &Task arrival rate of $MU_i$\\
$T_i^{l}$ &Average response time of the tasks deployed on local computing for $MU_i$\\
$T_i^{e}$ &Average response time of the tasks deployed on ECNs for $MU_i$\\
$T_i$ &Overall average response time of the tasks for $MU_i$\\
$\rho_i$ &Task allocation probability vector for $MU_i$\\
$\rho_{ij}$ & Probability assigning the task belonged to $MU_i$ to computing node~$j$\\
$\mu _j^i $&Processing rate that computing node $j$ can provision to $MU_i$\\
$\mathcal{H}_i$ &Set of available computing nodes for $MU_i$\\
$\rho_{-i}$ &Task allocation probability vector of the users excepted user $i$\\
$\mathcal{L}_{ij}$ &Mean delay of transmitting a task of $MU_i$ to computing node $j$\\
$\mu_j^i$ &Available processing rate for $MU_i$ on computing node $j$\\
$\mathcal{Q}$ &Joined strategy set of all players\\
$\mathcal{T}_i^{max}$ &Acceptable maximum value of the expected task response time  for $MU_i$\\
$\mu_{ij}^0$ &Initial processing rate specified by $MU_i$ for computing node $j$\\
\toprule[1.0pt]
\end{tabular}
\end{table}

\subsection{Model of the mobile edge cloud}

Similar to WiFi APs, in a given physical area, we assume that there exists a mobile edge cloud system $\mathcal{M}$, which has $m$ edge cloud nodes (ECN), $\mathcal{M}=\{ECN_1, ECN_2, \cdots, ECN_m\}$. An edge cloud node can be envisioned as a small data center such as Cloudlet, which is able to provision service for the users closed to it, avoiding the unacceptable delay yielded by the uncertain network influence while the user task is transmitted to remote clouds through multiple network nodes. Moreover, there are a number of mobile users who can access certain edge cloud nodes in the area. Let $\mathcal{N}=\{MU_1, MU_2, \cdots, MU_n\}$ represent the set of mobile users. Fig. \ref{figure5:1} describes the model of the edge cloud system.
\begin{figure}[!ht]
  \centering
  \includegraphics[scale=0.65]{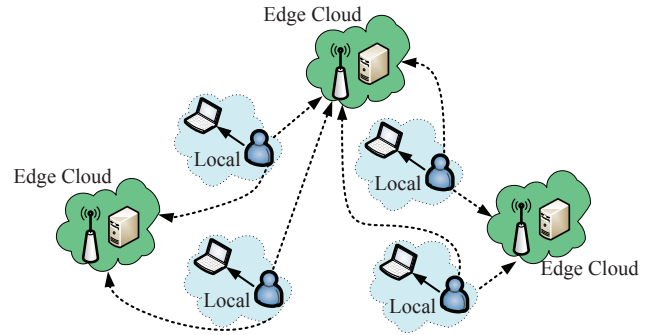}
  \caption{Mobile edge cloud model}
  \label{figure5:1}
\end{figure}

Assume that each mobile user generates tasks in terms of a Poisson process and independently of other mobile users. The task execution requirements (measured by the number of instructions to be executed) are i.i.d. exponential random variables \emph{r} with mean $\bar{r}$. An edge cloud node $j$ performs the user task with speed $s_j$ (measured by the number of instructions executed in one unit of time), and the execution times on the node are i.i.d. exponential random variables ${x_j} = r/{s_j}$ with mean ${\overline x _j} = {{{\overline r } \mathord{\left/
 {\vphantom {{\overline r } s}} \right.
 \kern-\nulldelimiterspace} s}_j}$. Hence, the average processing rate of the edge cloud node in one unit of time can be denoted by  ${\mu _j} = {{1 \mathord{\left/
 {\vphantom {1 {\overline x }}} \right.
 \kern-\nulldelimiterspace} {\overline x }}_j}$. Moreover, an edge cloud node maintains a queue with infinite capacity for waiting tasks, where the first-come-first-served queuing discipline is adopted. Such an edge cloud node can be modeled as an M/M/1 queuing system.

Notably, considering the fact that the service time of user tasks obeys general distributions, the M/G/1 queuing model should be more appropriate for them. However, there is still not a simple close-form solution for this queuing model, which can help us to understand the effect of the task size variability on the mean response time \cite{2013Performance}.

Let $\lambda_j$ denote the task arrival rate on edge cloud node $ECN_j$ with an average processing rate ${\mu _j}$. In terms of Little law\cite{2013Performance}, the average response time of $ECN_j$ can be calculated by
\begin{equation}\label{equ5:1}
  T_j = \frac{1}{{{\mu _j} - {\lambda _j}}}.
\end{equation}

For the system stability, the task arrival rate must be lower than the processing rate in an ECN. Hence, the following inequality should be hold.
\begin{equation}\label{equ5:2}
  \lambda _j  <\mu _j,\;\forall j \in \mathcal{M}.
\end{equation}

Otherwise, the queue at the ECN will build up to infinity and the expected response time for the user task will be infinite.

\subsection{Model of the edge cloud user}

Generally, the edge cloud user has a certain computing capability, which is far lower than that of the ECN, and can also offload her tasks to closed ECNs or remote cloud servers by the wireless network, while having to tackle the tasks with realtime requirements. Without loss of generality, in order to maximize her own profit, an edge cloud user may distribute strategically her tasks to suitable points (such as closed ECNs and her own intelligent terminal). Fig. \ref{figure5:2} characterizes the model of the edge cloud user.

\begin{figure}[!ht]
  \centering
  \includegraphics[scale=0.65]{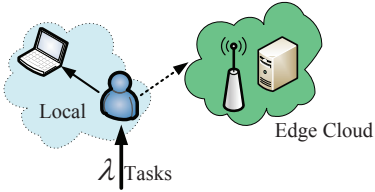}
  \caption{Edge cloud user model}
  \label{figure5:2}
\end{figure}

Minimizing the expected response time of the task has become an inevitable problem in the mobile edge computing due to the time-sensitive requirement of the task. Notably, saving energy is also extremely significant in the scenarios with constrained energy provisioning. Nevertheless, it is no longer the primary goal in numerous edge computing occasions, e.g., pilotless automobile, speech recognition, argument reality and so on, even it may be negligible in these cases. Accordingly, for simplification, we do not take energy saving into account in this study.

Moreover, a user terminal with a certain computing capability can be regarded as a small server. Hence, we also employ the queuing model to characterize it. Let $\hat{\lambda}_i$ denote the task arrival rate of $MU_i$ in a time interval, and $\hat{\mu_i}$ is her average processing rate for the task. Given that the quantity of tasks assigned to ECNs by $MU_i$ is $\lambda_i^{e}$, then  the number of tasks remaining at her own terminal is denoted by $\lambda_i^{l}$, which is written by

\begin{equation}\label{equ5:3}
 \lambda_i^{l}=\hat{\lambda}_i-\lambda_i^{e},
\end{equation}
where
\begin{equation}\label{equ5:4}
  \lambda_i^{e} \in [0, \hat{\lambda}_i].
\end{equation}

According to Eq. (\ref{equ5:1}), the average response time of tasks deployed on local terminal for $MU_i$ is given as
\begin{equation}
  T_i^{l} = \frac{1}{{{\hat{\mu _i}} - {\lambda_i^{l}}}}.
\label{equ5:5}
\end{equation}

While $MU_i$'s tasks are deployed on ENCs for execution, the total time overhead for these tasks mainly consists of the processing time $T_i^{p}$ and the transmitting time $T_i^{t}$, and is given as
\begin{equation}
  T_i^{e} = T_i^{p}+T_i^{t}.
\label{equ5:6}
\end{equation}

Here, the returned time of computing results is omitted. In our models, the mobile user, as an intelligent and selfish individual, is always greedy to search the computing nodes, which can promise the maximization of her own interests, and then allocates her tasks to them such that her interests are maximized. That is, minimizing the average response time of the task means the maximization of user interests in our work.

Furthermore, Each mobile user should be aware of the fact that the ECNs are serving other mobile users, while making decisions of task offloading. Hence, the interests of each mobile user rely on the strategies of others. It is reasonable that using the game theory to model the competition among the mobile users, which can help obtain an in-depth analytical understanding of the service provisioning problem of ENCs.

\section{Non-cooperative game for task offloading}

From aforementioned discussions, the mobile user can be viewed as the player in the non-cooperative game, and selfishly makes strategy profiles to maximize her interest. Following the routine, the following problem is how to formulate an non-cooperative game, which can quickly reach a stable situation where the task offloading strategy of each user is optimal and no one wants to change it.

\subsection{Game formulation for task offloading }
In the edge computing situation, a mobile user can distribute her tasks to any closed computing nodes including ECNs and her own terminal. For representation convenience, throughout the paper, the edge cloud node and user local terminal are collectively called computing nodes. The set of available computing nodes for $MU_i$ is represented by $\mathcal{H}_i=\mathcal{M}\cup\{MU_i\}$.

In our models, the mobile users are rational, and act independently in a selfish manner that maximizes their own interests. Let $\rho_i$ denote a probability profile determining $MU_i$'s task allocation. To minimize the task response time, $MU_i$ will do her best to achieve an optimal $\rho_i$.
$$ \rho_i=\{ {\rho_{i1}},{\cdots },{\rho_{ij}},{\cdots },{\rho_{iH}}\},$$
where~$\rho_{ij}$~denotes the probability by that $MU_i$ assigns her tasks to computing node $j$, and~$\sum\nolimits_{j=1}^H {{\rho _{ij}}}=1,H=m+1$. The strategy profile set of $MU_i$ is given as

\begin{equation} \label{equ5:7}
\mathcal{Q}_i = \left\{ {\rho} _i \Bigg| \sum\limits_{j = 1}^H {\rho _{ij}}  =1 \;\hbox{and} \;  \rho _{ij} \ge 0 \right\}.
\end{equation}

While having determined the task allocation probability vector $\rho_i$, $MU_i$ will assign her tasks to computing nodes in terms of it. In addition, the available computing capability of each computing node for $MU_i$ is represented as follows:
  $${S_i}=\{ \mu _1^i,\cdots,\mu _j^i,\cdots,\mu _H^i\},$$
where~$\mu _j^i $~represents the processing rate that computing node $j$ can provision to $MU_i$.

Let the strategy profile of other mobile users is $\rho_{-i}=\{\rho_1,\cdots,\rho_{i-1},\rho_{i+1},\cdots,\rho_n\}$, and the time of transmitting a task from $MU_i$ to computing node $i$ is denoted by $\mathcal{L}_{ij}$ (if the destination is herself, i.e., $i=j$, then $\mathcal{L}_{ij}=0$). Combining Eqs (\ref{equ5:1}) and (\ref{equ5:6}), the utility function of $MU_i$, i.e., her overall excepted response time of the task, can be written as
\begin{equation}
{T_i}(\rho_i,\rho_{-i} ) = \sum\limits_{j = 1}^H {\rho _{ij}\left(\frac{1}{{\mu
_j^i - {\rho _{ij}}{\hat{\lambda} _i}}}+\rho_{ij}\hat{\lambda}_i \mathcal{L}_{ij}\right)}
\label{equ5:8}
\end{equation}

Obviously, the mobile user, as a selfish individual, is bound to seek an optimal strategy to maximize her interests. Here, the optimal strategy of $MU_i$ is defined as follows
\begin{definition}\label{definition5:1}
(\textbf{Optimal Strategy}) Given the strategy profiles of other mobile users $\rho_{-i}$, $MU_i$'s optimal strategy is $\rho_i^* \in \mathcal{Q}_i$, if she prefers $\rho_i^*$ to any other strategy $\rho_i \in \mathcal{Q}_i$. That is
\begin{equation}\label{equ5:9}
  T_i(\rho_i^*,\rho_{-i})\leq T_i(\rho_i,\rho_{-i}),\; i = 1,\cdots,n.
\end{equation}
\end{definition}

Generally, a non-cooperative game comprises of the set of players, the strategy of the player and the set of players' strategies\cite{Myerson2013}. In our models, the mobile user is the player in the game, the player collection is denoted by $\mathcal{N}$, the strategy set of each player $MU_i$ is given by $\mathcal{Q}_i$, and the joined strategy set of all players is represented by $\mathcal{Q} = {\mathcal{Q}_1} \times  \cdots  \times {\mathcal{Q}_n}$.

Formally, we characterize the above game as a 2-tuple $G = \left\langle {\mathcal{Q},{T}} \right\rangle$, where ${T} = \left( {{T_1}, \cdots ,{T_n}} \right)$. Given the joined strategies of other participants ${\rho}_{-i}$, the objective of $MU_i$ is to achieve the optimal strategy ${\rho}_i^* \in \mathcal{Q}_i$, which minimizes the excepted response time of her tasks, ${T_i}({{\rho} _i},{{\rho} _{ - i}})$. That is

\begin{equation}\label{equ5:10}
{\rho} _i^\ast \in \mathop {\arg \min }\limits_{{{\rho} _i} \in {\mathcal{Q}_i}} {T_i}({{\rho} _i},{\rho} _{ - i}), \ ({{\rho} _i},{\rho} _{ - i}) \in \mathcal{Q}.
\end{equation}

The Nash equilibrium has a beneficial self-stability property, such that all selfish players at equilibrium can achieve a mutually satisfactory solution and no one has the incentive to change anymore. Therefore, given that the jointed optimal strategies of other participants $\rho_{-i}^*$, we have following inequality for the mobile user $MU_i$ at the Nash equilibrium.

\begin{equation}\label{equ5:11}
  T_i(\rho_i^*,\rho_{-i}^*)\leq T_i(\rho_i,\rho_{-i}^*),\; i = 1,\cdots,n.
\end{equation}

Obviously, the existence of the Nash equilibrium in a non-cooperative game is an essential condition ensuring the system stability. Hence, more attentions should be placed on it while designing a distributed task offloading mechanism with the game theory.

\subsection{ Analysis of the Nash equilibrium existence}
Following aforementioned discussions, the existence of the Nash equilibrium implies that the approach proposed by us is feasible. Hence, we will discuss and analyse the Nash equilibrium existence for the non-cooperative game $G = \left\langle {Q,{T}} \right\rangle$ in this subsection.

\begin{lemma}
For each $MU_i$, given a convex and compact strategy set $\mathcal{Q}_i$, and the function of the task expected response time ${T_i}({{\rho} _i},{{\rho} _{ - i}})$, which is continuously differentiable on ${\rho}_i \in \mathcal{Q}_i$, the function ${T_i}({{\rho} _i},{{\rho} _{ - i}})$ is convex on ${\rho}_i \in \mathcal{Q}_i$ for any strategy profile ${\rho}_{-i}$.
\label{lemma5:1}
\end{lemma}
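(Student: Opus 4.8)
The plan is to exploit the \emph{separable structure} of the utility function in Eq.~(\ref{equ5:8}). Fix $\rho_{-i}$; then the available rates $\mu_j^i$ are constants, and
\[
T_i(\rho_i,\rho_{-i}) = \sum_{j=1}^{H} f_j(\rho_{ij}), \qquad
f_j(x) = \frac{x}{\mu_j^i - x\hat{\lambda}_i} + \hat{\lambda}_i \mathcal{L}_{ij}\, x^2 ,
\]
i.e.\ each summand depends on a single coordinate of $\rho_i$. Consequently the Hessian $\nabla^2_{\rho_i} T_i$ is diagonal, with $j$-th diagonal entry $f_j''(\rho_{ij})$ and all cross partials zero. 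Since a diagonal matrix is positive semidefinite exactly when its diagonal entries are nonnegative, the multivariate convexity claim reduces to the $H$ scalar sign checks $f_j'' \ge 0$ on the feasible range of $\rho_{ij}$.

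First I would pin down the relevant domain: for $T_i$ to be finite the stability condition of Eq.~(\ref{equ5:2}) must hold, i.e.\ $\mu_j^i - \rho_{ij}\hat{\lambda}_i > 0$, so $\rho_{ij}\in[0,\mu_j^i/\hat{\lambda}_i)$; intersecting $\mathcal{Q}_i$ with this region leaves a convex set, which is where convexity is to be asserted. On this domain a direct computation gives
\[
f_j'(x) = \frac{\mu_j^i}{(\mu_j^i - x\hat{\lambda}_i)^2} + 2\hat{\lambda}_i \mathcal{L}_{ij}\, x , \qquad
f_j''(x) = \frac{2\mu_j^i \hat{\lambda}_i}{(\mu_j^i - x\hat{\lambda}_i)^3} + 2\hat{\lambda}_i \mathcal{L}_{ij} .
\]
Both terms of $f_j''$ are nonnegative, since $\mu_j^i>0$, $\hat{\lambda}_i>0$, $\mathcal{L}_{ij}\ge 0$, and the cube $(\mu_j^i - x\hat{\lambda}_i)^3 > 0$ on the stability region. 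Hence every diagonal entry of $\nabla^2_{\rho_i}T_i$ is nonnegative, the Hessian is positive semidefinite at each feasible point, and $T_i(\cdot,\rho_{-i})$ is convex. As an alternative that avoids differentiation, one may rewrite $\frac{x}{\mu_j^i - x\hat{\lambda}_i} = -\tfrac{1}{\hat{\lambda}_i} + \frac{\mu_j^i/\hat{\lambda}_i}{\mu_j^i - x\hat{\lambda}_i}$, which is an affine shift of a positive multiple of the reciprocal of a positive affine function (hence convex), plus the obviously convex term $\hat{\lambda}_i\mathcal{L}_{ij}x^2$; a sum of convex functions of separate variables is convex.

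I do not anticipate a genuine obstacle; the only subtleties worth stating explicitly are (i) that a \emph{feasible} $\rho_i$ must respect the stability constraint, so that $T_i$ — and thus its convexity — is well posed, and (ii) that separability is precisely what collapses the full-Hessian argument into $H$ one-dimensional checks. The continuous-differentiability hypothesis of the lemma is automatic on the (open) stability region, each $f_j$ being a rational function with non-vanishing denominator there.
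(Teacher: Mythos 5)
Your proof is correct and follows essentially the same route as the paper: exploit the coordinate-separability of $T_i$, observe that the Hessian in $\rho_i$ is diagonal with entries $\frac{2\mu_j^i\hat{\lambda}_i}{(\mu_j^i-\rho_{ij}\hat{\lambda}_i)^3}+2\hat{\lambda}_i\mathcal{L}_{ij}$, and use the stability condition $\mu_j^i>\rho_{ij}\hat{\lambda}_i$ to conclude positive semidefiniteness and hence convexity. Your explicit treatment of the feasible domain and the alternative composition argument are welcome refinements, but the core argument is the paper's.
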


\begin{proof}\label{proof5:1}
Obviously, the strategy set of each users $\mathcal{Q}_i$ is convex and compact in our game model, and the function of the task expected response time ${T_i}({{\rho} _i},{{\rho} _{ - i}})$ is continuously differentiable on ${\rho}_i$. In terms of literatures\cite{chen2014autonomous,scutari2012monotone}, if the Hessian matrix of the function ${T_i}({{\rho} _i},{{\rho} _{ - i}})$ is positive semidefinite,then the above lemma would hold.

According to Eq. (\ref{equ5:8}), the first derivative of the function ${T_i}({{\rho} _i},{{\rho} _{ - i}})$ for $MU_i$'s strategy vector $\rho_i$ is given as
\begin{align}
{\nabla _{{{\rho} _i}}}{T_i}({{\rho} _i},{{\rho} _{ - i}}) & = \left[ {\frac{{\partial {T_i}({{\rho} _i},{{\rho} _{ - i}})}}{{\partial \rho _{ij}}}} \right]_{j = 1}^H   \nonumber \\
& = \left({\frac{{\partial {T_i}({{\rho} _i},{{\rho} _{ - i}})}}{{\partial \rho _{i1}}}, \ldots ,\frac{{\partial {T_i}({{\rho} _i},{{\rho} _{ - i}})}}{{\partial \rho _{iH}}}} \right)   \nonumber\\
& = \left( \frac{\mu_1^i}{(\mu_1^i-\rho_{i1}\hat{\lambda}_i)^2}+2\rho_{i1}\hat{\lambda}_i\mathcal{L}_{i1}, \right. \nonumber\\
&~~~~\;~~~~\left. \ldots ,\frac{\mu_H^i}{(\mu_1^i-\rho_{iH}\hat{\lambda}_i)^2}+2\rho_{iH}\hat{\lambda}_i \mathcal{L}_{iH} \right ),\nonumber
\end{align}
and its Hessian matrix is expressed as
\begin{align}
\nabla _{{{\rho} _i}}^2{T_i}({{\rho} _i},{{\rho} _{ - i}})& = {\rm{diag}}\left\{ {\left[ {\frac{{{\partial ^2}{T_i}({{\rho} _i},{{\rho} _{ - i}})}}{{\partial (\rho _{ij})^2}}} \right]_{j = 1}^H} \right\}  \nonumber \\
& = {\rm{diag}}\left\{ {\left[ {\frac{2\mu_j^i\hat{\lambda}_i}{(\mu_j^i-\rho_{ij}\hat{\lambda}_i)^3}
}+2\hat{\lambda}_i \mathcal{L}_{ij} \right]_{j = 1}^H} \right\}.  \label{equ5:12}
\end{align}

We learn that inequality $\mu_j^i > \rho_{ij}\hat{\lambda}_i$ must hold in terms of  Expression (\ref{equ5:2}); otherwise, the queue will build up to infinity and the expected response time for the task will be infinite. Accordingly, the diagonal matrix in Eq. (\ref{equ5:12}) has all diagonal elements being positive. Hence, the Hessian matrix of the function ${T_i}({{\rho} _i},{{\rho} _{ - i}})$ is positive semidefinite and the result follows.

This remark completes the proof.
\end{proof}

\begin{theorem}\label{theorem5:1}
Non-cooperative game $G= \left\langle {Q,{T}} \right\rangle$ exists a Nash equilibrium at least.
\end{theorem}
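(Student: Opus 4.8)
The plan is to reduce the statement to a classical existence theorem for Nash equilibria in continuous games. The cleanest route is the Debreu--Glicksberg--Fan theorem (equivalently Rosen's 1965 concave-game theorem applied to the cost functions $-T_i$): a pure-strategy Nash equilibrium is guaranteed whenever, for every player $i$, (i) the strategy set is a nonempty, convex, compact subset of a finite-dimensional Euclidean space, (ii) the cost function is jointly continuous in the whole profile $(\rho_i,\rho_{-i})$, and (iii) it is convex in the player's own strategy $\rho_i$ for every fixed $\rho_{-i}$. So the proof is essentially a checklist against these three hypotheses.

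Conditions (i) and (iii) are already in hand. The set $\mathcal{Q}_i$ in Eq.~(\ref{equ5:7}) is exactly the standard $(H-1)$-dimensional probability simplex in $\mathbb{R}^{H}$, which is nonempty, convex, and closed and bounded, hence compact. Condition (iii) is precisely the content of Lemma~\ref{lemma5:1}, which showed that the Hessian $\nabla^2_{\rho_i} T_i$ is diagonal with strictly positive entries, hence positive semidefinite, so $T_i(\cdot,\rho_{-i})$ is convex on $\mathcal{Q}_i$ for any fixed $\rho_{-i}$.

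It remains to verify (ii). From Eq.~(\ref{equ5:8}), $T_i$ is a finite sum of terms $\rho_{ij}/(\mu_j^i-\rho_{ij}\hat\lambda_i)+\rho_{ij}^2\hat\lambda_i\mathcal{L}_{ij}$; the second summand is polynomial, and the first is a ratio of continuous functions whose denominator never vanishes as long as the stability condition $\mu_j^i>\rho_{ij}\hat\lambda_i$ of Expression~(\ref{equ5:2}) holds. Since the model assumes this condition throughout, I would either fold it in as a standing feasibility restriction --- replacing $\mathcal{Q}_i$ by the closed convex sub-simplex on which $\rho_{ij}\hat\lambda_i\le\mu_j^i-\varepsilon$ for a fixed $\varepsilon>0$, which preserves nonemptiness, convexity, and compactness --- or argue that $T_i\to\infty$ as any $\rho_{ij}\hat\lambda_i\uparrow\mu_j^i$, so no minimizing strategy ever approaches that boundary face. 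On the admissible region $T_i$ is then a sum and composition of continuous functions, hence continuous in $\rho_i$; and since the only dependence on the opponents enters through the provisioned rates $\mu_j^i=\mu_j^i(\rho_{-i})$, which are themselves continuous in $\rho_{-i}$, $T_i$ is jointly continuous on $\mathcal{Q}$. Invoking the existence theorem now produces a profile $\rho^\ast=(\rho_1^\ast,\dots,\rho_n^\ast)\in\mathcal{Q}$ with $T_i(\rho_i^\ast,\rho_{-i}^\ast)\le T_i(\rho_i,\rho_{-i}^\ast)$ for all $i$ and all $\rho_i\in\mathcal{Q}_i$, i.e. exactly the equilibrium inequality (\ref{equ5:11}).

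The one genuinely delicate point --- and the step I expect to be the main obstacle --- is reconciling ``compact strategy set'' with ``finite continuous cost'': the raw simplex $\mathcal{Q}_i$ is compact but $T_i$ blows up on the part of its boundary where an edge node is driven to saturation, so a naive application of the theorem is not literally valid. The fix above (restricting to the $\varepsilon$-interior sub-simplex, or using a coercivity/quasi-convexity version of the theorem) resolves it, but it should be stated carefully rather than swept under the rug; one should also be explicit that $\mu_j^i$ depends continuously on $\rho_{-i}$, since otherwise joint continuity of $T_i$ is not automatic. Everything else is routine verification.
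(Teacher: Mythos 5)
Your proposal takes essentially the same route as the paper: both reduce the claim to the classical Debreu existence theorem by checking that each $\mathcal{Q}_i$ is a nonempty convex compact subset of Euclidean space and that $T_i(\cdot,\rho_{-i})$ is convex via the positive-definite diagonal Hessian of Lemma~\ref{lemma5:1}. Your additional observation --- that joint continuity of $T_i$ on the full simplex is not automatic because the cost blows up where $\rho_{ij}\hat{\lambda}_i$ approaches $\mu_j^i$, so one must either restrict to an $\varepsilon$-interior sub-simplex or invoke coercivity --- is a legitimate refinement of a point the paper's own proof passes over in silence, and strengthens rather than alters the argument.
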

\begin{proof}
In terms of the results of literatures \cite{Debreu1952A,fadlullah2014gtes}, an non-cooperative game existing a Nash equilibrium must satisfy two conditions: first, the strategy set $\mathcal{Q}_i$ for each user $MU_i$ is an non-empty convex closed and upper-bounded subset on Euclidean space; second, given the joined strategies of other users ${\rho}_{-i}$, the utility function of $MU_i$, i.e., the function of the task expected response time ${T_i}({{\rho} _i},{{\rho} _{ - i}})$ is continuously differentiable and convex for any strategy  ${\rho}_i \in \mathcal{Q}_i$. Obviously, the two conditions are satisfied in the game $G= \left\langle {Q,{T}} \right\rangle$ according to lemma \ref{lemma5:1} and the result follows.

this remark completes the proof.
\end{proof}

\section{Algorithm design}
In this section, we focus on the implement of the distributed task offloading approach for the non-cooperative game $G$ formulated above. Here, a decentralized task offloading frame promising the maximization of the user utility is proposed.

\subsection{Optimal task offloading algorithm for users }
Before the system arrives a Nash equilibrium, every step of improvement update carried by the participant in the game $G$ is to minimize the task expected response time, such that her utility reaches maximization. That is, the best response of each participant is the solution for the following optimization problem (labeled P1).

\begin{equation}\label{equ5:13}
  \text{P1~~~~minimize}\;\;{T_i}(\rho_i,~\rho_{-i} ), \; i = 1,\cdots,n.
\end{equation}

However, it is significantly difficult to directly address the above problem. In fact, while making the optimal decision for the task offloading at each step of improvement update, every participant considers only the computing resources being available for herself, but the affection yielded by others' strategies on her.

Hence, the optimization problem for each participant may be viewed as a centralized decision problem. Taking advantage of Nash bargaining solution (NBS) addressing such problems\cite{Boyang2017cooperation,Subrata2010Cooperative,Wu2014Cooperative}, we will adopt the NBS approach to solve it in our work.

Generally, each user has a maximum toleration for the average response time of the task. Let $\mathcal{T}_i^{max}$ denote it for $MU_i$. Therefore, the following constraint for any user must be met when she conducts task allocation on a computing node.

\begin{equation}\label{equ5:14}
  {\frac{1}{{\mu_j^i - \lambda_{ij}^{max}}}+\lambda_{ij}^{max} \mathcal{L}_{ij}}\leq \mathcal{T}_i^{max},
\end{equation}
where $\lambda_{ij}^{max}$ is the maximum rate with that $MU_i$ can assign her tasks to computing node $j$ under given constraints, such as $\mathcal{T}_i^{max}$, $\mu_j^i$ and $\mathcal{L}_{ij}$. The available processing rate $\mu_j^i$ can be obtained by observing the current state of computing node $j$ by $MU_i$. By several algebraic calculation, We have the following results in terms of the above inequality.

\begin{equation}\label{equ5:15}
  \lambda_{ij}^{max}=\frac{\mathcal{T}_i^{max}+\mathcal{L}_{ij}\mu_j^i \pm \omega}{2\mathcal{L}_{ij}},
\end{equation}
where $$\omega=\sqrt{(\mathcal{T}_i^{max}+\mathcal{L}_{ij}\mu_j^i)^2 -4\mathcal{L}_{ij}(\mathcal{T}_i^{max} \mu_j^i -1)}.$$

Notably, the following constraints must be satisfied.
\begin{eqnarray}
\mu_j^i > \lambda_{ij}^{max} \geq 0, \label{equ5:16}\\
\mathcal{T}_i^{max}-\lambda_{ij}^{max} \mathcal{L}_{ij}>0. \label{equ5:17}
\end{eqnarray}

Otherwise, the system will be impractical. Constraint (\ref{equ5:16}) guarantees the non-negative of the performance provided to $MU_i$ by computing node $j$. Expression (\ref{equ5:17}) implies that the transmitting delay can not exceed to the maximum value being acceptable for the user. Therefore, Equation (\ref{equ5:15}) can be rewritten as
\begin{equation}\label{equ5:18}
  \lambda_{ij}^{max}=\frac{\mathcal{T}_i^{max}+\mathcal{L}_{ij}\mu_j^i - \omega}{2\mathcal{L}_{ij}}.
\end{equation}

If either of expressions (\ref{equ5:16}) and (\ref{equ5:17}) cannot be satisfied on computing node $j$, then the maximum rate $\lambda_{ij}^{max}$ with that $MU_i$ can assign tasks to this computing node should be set to 0.

For applying the NBS method in our game model, the initial processing rate on computing node $j$ for $MU_i$ is denoted by $\mu_{ij}^0$, which is the maximum processing rate that computing node $j$ can provision to $MU_i$ without violating constraints (\ref{equ5:16}) and (\ref{equ5:17}). Obviously, there is $\mu_{ij}^0=\lambda_{ij}^{max}$, those computing nodes with initial processing rate 0 will be removed. Moreover, because the initial processing rate has considered the maximum transmitting delay of tasks, the traffic affect on the task response time is omitted to reduce the difficulty of our problem in the following NBS method.

According to the above propositions and the NBS definition \cite{Yaiche2000}, formally, problem P1 can be converted into the following optimization problem (labeled P2).
\begin{equation}\label{equ5:19}
  \text{P2~~~~minimize}\;\; -\sum_{j=1}^H\ln(\mu_{ij}^0-\rho_{ij}\hat{\lambda}_i),\; i = 1,\cdots,n,
\end{equation}
subject to：
\begin{eqnarray}
\sum\nolimits_{j = 1}^{H} {{\rho _{ij}}}  = 1,\label{equ5:20}\\
{\rho _{ij}} \ge 0 ,\; j = 1,\cdots,H, \label{equ5:21}\\
{{\rho _{ij}}\hat{\lambda} _i}  \leq {\mu_{ij}^0},\; j = 1,\cdots,H.
\label{equ5:22}
\end{eqnarray}

As the utility function of the mobile user is reset by Expression (\ref{equ5:19}), let $T'$ denote their new utility function collection, the non-cooperative game $G$ can be rewritten as $G'=\langle\mathcal{Q},T'\rangle$.

\begin{theorem}\label{theorem5:2}
Non-cooperative game $G'= \left\langle {Q,{T'}} \right\rangle$ exists a Nash equilibrium at least.
\end{theorem}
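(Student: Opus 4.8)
The plan is to reproduce, in structure, the argument used for Theorem~\ref{theorem5:1}, now applied to the reset cost $T'_i = -\sum_{j=1}^H \ln(\mu_{ij}^0 - \rho_{ij}\hat{\lambda}_i)$, and to verify the same two hypotheses of the existence theorem of \cite{Debreu1952A,fadlullah2014gtes} that were used there. First I would inspect the strategy space of $G'$: the strategy set of $MU_i$ is the set $\mathcal{Q}_i$ of (\ref{equ5:7}) intersected with the extra linear constraints (\ref{equ5:22}). Being a bounded polyhedron (it is contained in the unit simplex of $\mathbb{R}^H$) it is convex, closed and bounded, hence compact; and it is non-empty as soon as the available capacity dominates the load, i.e. $\sum_{j=1}^H \mu_{ij}^0 > \hat{\lambda}_i$, which is the standing feasibility assumption of the model (this strict version also guarantees a non-empty relative interior). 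This disposes of the first condition.

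For the second condition I would show that, for every fixed ${\rho}_{-i}$ (which enters $T'_i$ only through the constants $\mu_{ij}^0$), the map ${\rho}_i \mapsto T'_i$ is continuously differentiable and convex on $\mathcal{Q}_i$. The quickest justification is that $\rho_{ij}\mapsto \mu_{ij}^0 - \rho_{ij}\hat{\lambda}_i$ is affine and strictly positive on the effective domain, while $t\mapsto -\ln t$ is $C^{\infty}$ and convex on $(0,\infty)$, so each summand, and hence their finite sum, is a convex $C^1$ function of ${\rho}_i$. To keep the exposition parallel with the proof of Lemma~\ref{lemma5:1}, I would also display the derivatives: the gradient of $T'_i$ has $j$-th component $\hat{\lambda}_i/(\mu_{ij}^0-\rho_{ij}\hat{\lambda}_i)$, and since the $j$-th summand depends only on $\rho_{ij}$ the Hessian $\nabla^2_{\rho_i} T'_i$ is the diagonal matrix whose $j$-th diagonal entry is $\hat{\lambda}_i^2/(\mu_{ij}^0-\rho_{ij}\hat{\lambda}_i)^2$, strictly positive by (\ref{equ5:22}). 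Hence $\nabla^2_{\rho_i} T'_i$ is positive definite, in particular positive semidefinite, so $T'_i$ is convex; both conditions hold and the existence of a Nash equilibrium of $G'$ follows exactly as for Theorem~\ref{theorem5:1}.

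The one point that genuinely needs care — and which a short write-up will probably only gesture at — is the tension between compactness and continuity on the face $\rho_{ij}\hat{\lambda}_i=\mu_{ij}^0$ of $\mathcal{Q}_i$, where $T'_i=+\infty$, so $T'_i$ is not continuous on the whole compact set on which it is nominally defined. I would resolve this by noting that $T'_i$ is coercive: it diverges to $+\infty$ whenever any coordinate approaches that face, so every best response of $MU_i$ lies in the relatively open region where all $\mu_{ij}^0-\rho_{ij}\hat{\lambda}_i>0$. One may then either run the fixed-point argument on a slightly shrunk compact subset of $\mathcal{Q}_i$ on which all payoffs are finite and the best-response correspondence is unchanged, or appeal to the standard extension of the Debreu--Fan--Glicksberg theorem to games whose convex cost functions are lower semicontinuous, finite on the relative interior of the strategy set, and $+\infty$ only on its boundary. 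With that caveat absorbed, the remaining verification is the routine computation sketched above.
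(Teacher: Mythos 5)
Your proposal is correct and follows essentially the same route as the paper: verify that each $\mathcal{Q}_i$ is a non-empty convex compact subset of Euclidean space and that $T'_i$ is continuously differentiable and convex in $\rho_i$ (the paper simply asserts both and invokes the same existence theorem as in Theorem~\ref{theorem5:1}, whereas you additionally exhibit the diagonal Hessian with entries $\hat{\lambda}_i^2/(\mu_{ij}^0-\rho_{ij}\hat{\lambda}_i)^2>0$). Your closing paragraph also flags a real issue the paper's one-line proof silently ignores --- $T'_i$ is $+\infty$ on the face $\rho_{ij}\hat{\lambda}_i=\mu_{ij}^0$ permitted by constraint (\ref{equ5:22}), so it is not continuous on all of $\mathcal{Q}_i$ --- and your coercivity/shrunk-compact-set argument is a legitimate way to close that gap.
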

\begin{proof}
Through observing the formulation of P2, we can learn that $\mathcal{Q}_i$, as the strategy set of $MU_i$, is a non-empty convex closed and upper-bounded subset on Euclidean space, and the objective function Eq. (\ref{equ5:19}) is also continuously differentiable and convex for any strategy  ${\rho}_i \in \mathcal{Q}_i$. Therefore, similar to the proof of Theorem \ref{theorem5:1}, we can also assert that the game $G'$ has at least a Nash equilibrium.

This remark completes the proof.
\end{proof}

Given $\mu_{ij}^0=\lambda_{ij}^{max},\; \forall j \in \{1,\cdots,H\}$ and sorting all available computing nodes for $MU_i$ in descending order of their initial performance $\mu _{i1}^0 \ge \mu _{i2}^0 \ge \cdots \ge \mu _{iH}^0$, we have the following conclusion.
\begin{theorem}\label{theorem5:3}
the solution $\rho_i=\{ {\rho_{i1}},{\cdots },{\rho_{ij}},{\cdots },{\rho_{iH}}\}$  of optimization problem P2 is given by
 \begin{equation}\label{equ5:23}
{\rho_{ij}} = \left\{ {\begin{array}{*{10}{l}}
{\dfrac{\mu_{ij}^0}{\hat{\lambda}_i} - \dfrac{\sum\nolimits_{j = 1}^k {\mu_{ij}^0} -\hat{\lambda}_i}{k\hat{\lambda}_i},}&{j = 1,\cdots,k};\\
{0,}&{j = k + 1,\cdots,H}.
\end{array}} \right.
 \end{equation}
where $k \in \{1,\cdots,H\}$ is the largest number exactly satisfying the following expression.
\begin{equation}\label{equ5:24}
  \mu _{ik}^0 > \frac{\sum\nolimits_{j = 1}^k {\mu_{ij}^0} -\hat{\lambda}_i}{k}.
\end{equation}

\end{theorem}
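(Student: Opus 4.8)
The plan is to treat P2 as a convex program and solve it through the Karush--Kuhn--Tucker (KKT) conditions, recognising the answer in (\ref{equ5:23}) as a classical ``water-filling'' allocation. On its effective domain $\{\rho_i:\mu_{ij}^0-\rho_{ij}\hat{\lambda}_i>0,\ \forall j\}$ the objective of (\ref{equ5:19}) has a diagonal Hessian with entries $\hat{\lambda}_i^2/(\mu_{ij}^0-\rho_{ij}\hat{\lambda}_i)^2>0$, so it is strictly convex and tends to $+\infty$ as $\rho_i$ approaches the boundary of the feasible polytope defined by (\ref{equ5:20})--(\ref{equ5:22}); since that polytope is compact and (under the standing feasibility assumption $\sum_{j}\mu_{ij}^0>\hat{\lambda}_i$, which is necessary for P2 to be feasible with finite value) has nonempty relative interior, the minimiser exists and is unique, and it is characterised exactly by the KKT system. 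Hence it suffices to exhibit one KKT point and show it equals (\ref{equ5:23}).

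Next I would form the Lagrangian with a multiplier $\nu$ for the equality (\ref{equ5:20}) and multipliers $\eta_j\ge 0$ for the constraints (\ref{equ5:21}); the upper bounds (\ref{equ5:22}) are automatically non-binding at an optimum (binding one would force $\ln 0$), so they carry no multiplier. Stationarity reads $\hat{\lambda}_i/(\mu_{ij}^0-\rho_{ij}\hat{\lambda}_i)=\eta_j-\nu$ for every $j$, with complementary slackness $\eta_j\rho_{ij}=0$. On the active set $S=\{j:\rho_{ij}>0\}$ we get $\eta_j=0$, so $\mu_{ij}^0-\rho_{ij}\hat{\lambda}_i$ equals a common positive constant $c:=-\hat{\lambda}_i/\nu$, i.e. $\rho_{ij}=(\mu_{ij}^0-c)/\hat{\lambda}_i$; off $S$ we have $\rho_{ij}=0$ and $\eta_j\ge 0$ forces $\mu_{ij}^0\le c$. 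Since the $\mu_{ij}^0$ are sorted in descending order, ``$\mu_{ij}^0>c$'' selects an initial block, so $S=\{1,\dots,k\}$ for some $k$; summing $\rho_{ij}=(\mu_{ij}^0-c)/\hat{\lambda}_i$ over $S$ and using (\ref{equ5:20}) gives $c=c_k:=\big(\sum_{j=1}^k\mu_{ij}^0-\hat{\lambda}_i\big)/k$, which is precisely the quantity in (\ref{equ5:23})--(\ref{equ5:24}).

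It remains to pin down $k$ and verify all sign conditions, and this consistency step is the real work. From the identity $k\,c_k=(k-1)c_{k-1}+\mu_{ik}^0$ one obtains $\mu_{ik}^0-c_k=\tfrac{k-1}{k}(\mu_{ik}^0-c_{k-1})$ and $c_k-c_{k-1}=\tfrac1k(\mu_{ik}^0-c_{k-1})$, so the statements ``$\mu_{ik}^0>c_k$'', ``$\mu_{ik}^0>c_{k-1}$'' and ``$c_k>c_{k-1}$'' are equivalent. Combining this with the descending order, I would show by induction that $\mu_{ik}^0>c_k$ holds on an initial segment $k=1,\dots,k_0$ and fails for all $k>k_0$, and that $\mu_{ij}^0\le c_{k_0}$ for every $j>k_0$; therefore the largest $k$ satisfying (\ref{equ5:24}) equals $k_0=|S|$, matching the KKT active set. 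Finally I would check $c_{k_0}>0$ (so the logarithms are defined and (\ref{equ5:22}) is slack): since $c_k$ increases on $\{1,\dots,k_0\}$ and is non-increasing afterwards, $c_{k_0}=\max_k c_k\ge c_H=\big(\sum_{j}\mu_{ij}^0-\hat{\lambda}_i\big)/H>0$ under the feasibility assumption; and $\rho_{ij}=(\mu_{ij}^0-c_{k_0})/\hat{\lambda}_i\ge 0$ for $j\le k_0$ while $\rho_{ij}=0$ for $j>k_0$, which is exactly (\ref{equ5:23}). The main obstacle is the monotonicity/consistency argument in this last paragraph --- proving that the threshold rule (\ref{equ5:24}) picks out precisely the KKT-active prefix rather than merely some feasible one; the KKT reduction and the algebra for $c$ are routine once that is in place.
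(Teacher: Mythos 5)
Your proposal follows essentially the same route as the paper: form the Lagrangian of P2 (with the upper bounds (\ref{equ5:22}) dropped as automatically slack), use KKT stationarity plus complementary slackness to show the positive components share a common value of $\mu_{ij}^0-\rho_{ij}\hat{\lambda}_i$, and recover that constant from the equality constraint, yielding the water-filling formula (\ref{equ5:23}). If anything, your write-up is more complete than the paper's, which simply asserts that searching $k$ from $H$ down to $1$ finds the right active set, whereas you actually establish the prefix/monotonicity property showing that the threshold rule (\ref{equ5:24}) selects exactly the KKT-active block, along with existence, uniqueness, and the positivity checks.
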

\begin{proof}
In our models, assume that the total arrival rate of tasks generally does not exceed the total processing rate of the systems. Hence, the constraint (\ref{equ5:22}) is omitted in our problem.

Let $\theta  \geq 0,{\eta _j} \geq 0,\;j \in \{1,2,\cdots,H\}$  denote the Lagrange multipliers. The Lagrangian of problem P2 is given as follows:
  \begin{equation}\label{equ5:25}
 \begin{split}
    & L({\rho_{i1}},{\cdots },{\rho_{iH}},\theta ,{\eta _1},{\cdots },{\eta _H})  \\
      & = -\sum_{j=1}^H\ln(\mu_{ij}^0-\rho_{ij}\hat{\lambda}_i) - \theta (\sum\limits_{j = 1}^H {\rho_{ij}}  - 1 ) - \sum\limits_{j = 1}^H {{\eta _j}\rho_{ij}}.
 \end{split}
 \end{equation}

The first-order Kuhn-Tucker conditions and constraints are given as follows:
\begin{equation}\label{equ5:26}
\frac{\partial L}{\partial {\rho _{ij}}} = \frac{\hat{\lambda} _i}{\mu_{ij}^0 - {\rho _{ij}}\hat{\lambda} _i} - \theta  - {\eta _j} = 0,
\end{equation}
subject to the constraints
\begin{eqnarray}
  \frac{\partial L}{\partial \theta} = \sum\nolimits_{j = 1}^H {{\rho _{ij}}}  - 1 = 0, \label{equ5:27}\\
  {\eta _j}{\rho _{ij}} = 0,\;\;{\eta _j} \ge {\rho _{ij}} \ge 0,\; j = 1,\cdots,H. \label{equ5:28}
\end{eqnarray}

In terms of Eq. (\ref{equ5:28}), if $\rho_{ij}=0$, we have ${\eta _j} \geq 0$; otherwise ${\eta _j} = 0$ . Therefore, combining the Equations (\ref{equ5:26}) and (\ref{equ5:28}), we have
 \begin{equation}\label{equ5:29}
\frac{\hat{\lambda} _i}{\mu_{ij}^0 - {\rho _{ij}}\hat{\lambda} _i}\left\{ {\begin{array}{*{20}{c}}
{ = \theta ,}&{\rm{if}\;\rho _{ij} > 0};\\
{ \ge \theta ,}&{\rm{if}\;\rho _{ij} = 0}.
\end{array}} \right.
 \end{equation}

This implies that those computing nodes with lower performance can be out of the service for $MU_i$, because no tasks are assigned to them. Hence, only $\rho _{ij}>0$ needs to be considered under this case. Let $k$ is an integer promising $\rho _{ij}>0$ for all $j\in \{1,\cdots,k\}$. The Kuhn-Tucker conditions is rewritten as:
\begin{equation}\label{equ5:30}
\frac{\hat{\lambda} _i}{\mu_{ij}^0 - {\rho _{ij}}\hat{\lambda} _i} = \theta ,\;j = 1,\cdots,k,
\end{equation}
subject to
\begin{equation}\label{equ5:31}
\sum\limits_{j = 1}^k {\rho _{ij}}  = 1.
\end{equation}

By adding the equation given by Eq. (\ref{equ5:30}) for all $j\in \{1,\cdots,k\}$ and some algebraic calculation, we have
\begin{equation}\label{equ5:32}
\theta=\frac{k\hat{\lambda}_i}{\sum\nolimits_{j = 1}^k {\mu _{ij}^0}-\hat{\lambda}_i}.
\end{equation}

Using the above result in Eq. (\ref{equ5:30}), we can obtain
\begin{equation}\label{equ5:33}
\rho _{ij} =\frac{\mu_{ij}^0}{\hat{\lambda}_i} - \frac{\sum\nolimits_{j = 1}^k {\mu_{ij}^0} -\hat{\lambda}_i}{k\hat{\lambda}_i},\;j = 1,\cdots,k.
\end{equation}

If we search \emph{k} from \emph{H} to 1, it is easy to determine the largest number \emph{k} satisfying the condition $$\mu_{ik}^0 > \frac{{\sum\nolimits_{j = 1}^k {\mu_{ij}^0 - \hat{\lambda}_i} }}{k},$$ which guarantees $\rho_{ij}>0$ for all $j\in \{1,\cdots,k\}$.

This remark completes the proof.
\end{proof}

Following the result of Theorem \ref{theorem5:3}, the Optimal task offloading method for $MU_i$ is proposed in Algorithm \ref{aglorithm5:1} (called \emph{OTOM}).
In this algorithm, firstly the initial processing performance of each computing node is determined, and then sorted in descending order (lines 2-3).  Appropriate computing nodes for $MU_i$ are picked to deploy tasks such that her utility can achieve maximization (lines 4-8). It is obvious that those computing nodes with lower performance will be removed from the service for $MU_i$, once the algorithm is finished. Final, each selected computing node is set to a task allocation probability, others are set to 0 (lines 9-11). As for the complexity of Algorithm \ref{aglorithm5:1}, its overhead primarily focuses on the computation of sorting for computing nodes and determining the computing nodes participating in the service with dual loops. Hence, the computation complexity of Algorithm \ref{aglorithm5:1} is $O(m\log(m))$.

\begin{algorithm}[!ht]  
\caption{OTOM($S_i,L_i,\mathcal{T}_i^{max},\hat{\lambda_i}$): Optimal task offloading method for $MU_i$ }
\label{aglorithm5:1}
\begin{algorithmic}[1]
\REQUIRE ~~\\       
The available processing rate for $MU_i$ on each computing node:~$S_i =\{ {\mu_1^i},\cdots,{\mu_H^i}\} $.\\
The mean delay of $MU_i$ transmitting tasks to each computing node: $\mathcal{L}_i=\{\mathcal{L}_{i1},\cdots,\mathcal{L}_{_iH} \}$.\\
The acceptable maximum response time for $MU_i$:~$\mathcal{T}_i^{max}$. \\
The mean task arrival rate on $MU_i$: $\hat{\lambda}_i$.
\ENSURE ~~\\           
$MU_i$' task allocation vector: $\rho_i=\{\rho_{i1},\cdots, \rho_{iH} \}$.%
\STATE $\rho_i \leftarrow \{0,\cdots,0\}$, initialize $MU_i$' task allocation vector
\STATE Calculate the initial performance of each computing node according to Eq. (\ref{equ5:18}), $S\leftarrow\{\mu_{i1}^0,\cdots,\mu_{iH}^0\}$
\STATE Sort~$S$~in descending order
\STATE $\upsilon  \leftarrow \dfrac{{\sum\nolimits_{j = 1}^{|S|} {\mu_{ij}^0 - \hat{\lambda}_i } }}{{|S|}}$
\WHILE {$\upsilon  > \mu _{i|S|}^0$}
\STATE $S \leftarrow S\backslash \{ {\mu _{i|S|}^0}\}$
\STATE $\upsilon  \leftarrow \dfrac{{\sum\nolimits_{j = 1}^{|S|} {\mu_{ij}^0 - \hat{\lambda}_i } }}{{|S|}}$
\ENDWHILE
\FOR{$j\leftarrow 1~to~|S|$}
\STATE ${\rho _{ij}} \leftarrow \dfrac{\mu_{ij}^0 - \upsilon}{\hat{\lambda}_i} $
\ENDFOR
\RETURN {$\rho_i$}
\end{algorithmic}
\end{algorithm}

\subsection{Distributed task offloading algorithm}
In this subsection, we proposed a distributed task offloading algorithm (called \emph{DITOA}) for non-cooperative game $G'$, which are elaborated in Algorithm \ref{aglorithm5:2}. The players (the mobile users) in the game act asynchronously. First, the expected task response time and allocation probability for all players are preset to 0 (lines 1-2). Each player employs \emph{OTOM} algorithm to update her own optimal strategy in round-robin manner (lines 5-10). Specially, to recalculate the task offloading strategy, we reduce the number of tasks that were assigned to each computing node in the previous round (line 7). Moreover, a deviation $\xi$, as the convergence condition, is built to measure whether the system has reached a Nash equilibrium. That is, the algorithm gets through when the cumulative deviation difference between two adjacent iterations is less than $\xi$ (line 16). While having completed the strategy update, the mobile user computes the cumulative deviation and then broadcasts it to other players (lines 11-13). In just doing so, other players can decide whether to continue updating their task offloading strategies based on the advertised information in practice.

\begin{algorithm}[!ht] 
\caption{DITOA($\xi$): Distributed task offloading algorithm}
\label{aglorithm5:2}
\begin{algorithmic}[1]
\REQUIRE ~~\\       
Acceptable deviation: ~$\xi$\\
\STATE $T_i \leftarrow 0, \forall i \in \{1,\cdots,n\}$
\STATE $\rho_i \leftarrow \{0,\cdots,0\}, \forall i \in \{1,\cdots,n\}$
\REPEAT
\STATE $sum\leftarrow 0$
\FOR {each $MU_i$}
\STATE Obtain free service rate $\{ {\mu_1^i},\cdots,{\mu_H^i}\} $ and transmitting delay $\mathcal{L}_i \leftarrow \{\mathcal{L}_{i1},\cdots,\mathcal{L}_{_iH} \}$ by inspecting the available computing nodes for $MU_i$
\STATE $\mu_j^i \leftarrow \mu_j^i + {\rho_{ij}\hat{\lambda}_i}, \forall j \in \{1,\cdots,H\}$
\STATE $S_i \leftarrow \{{\mu_1^i},\cdots,{\mu_H^i}\}$
\STATE $\rho_i \leftarrow \text{OTOMU}(S_i,\mathcal{L}_i,\mathcal{T}_i^{max},\hat{\lambda}_i)$
\STATE Update the task offloading strategy by $\rho_i$
\STATE Calculate response time~$T_i$' using Eq. (\ref{equ5:8})
\STATE $sum \leftarrow sum + |T_i'-T_i|$
\STATE Advertise cumulative deviation~$sum$ to all players
\STATE $T_i \leftarrow T_i'$
\ENDFOR
\UNTIL{$sum \leq \xi$}
\end{algorithmic}
\end{algorithm}

The above algorithm can be started periodically or when system parameters change. For example, when the task arrival rate on a mobile user changes, she runs immediately the algorithm to update her task offloading strategy, and then broadcast the deviation to others. Their strategies will continue to be updated until the Nash equilibrium is reached. Besides, the processing rate on the edge cloud nodes also need to be reported to mobile users in our methods. The measurement of the processing rate can refer to the literatures\cite{Cao2013a,Mei2015,Anand1999ELISA}. In addition, given that the mobile users employ \emph{OTOM} to calculate their optimal strategies in Algorithm \ref{aglorithm5:2}, the computation complexity of \emph{DITOA} is $O(nm\log(m))$. After several iterations, the algorithm will converge to an efficient equilibrium, and then terminates.

\section{Experimental Results}
In the section, we conduct extensive simulation experiments to measure the performance of the approaches proposed by us, which include the expected task response time, the convergence speed of the proposed algorithms and so on.

For comparison purposes, we also implemented two alternative task allocation methods: proportional-scheme algorithm (called PS) proposed by the literature\cite{Chow1979Models}; global optimal scheme (denoted as GOS) employed by the literatures\cite{Li2012Optimal,Cao2013a,Huang2017Energy}. Our method is labeled DITOA.

The PS algorithm runs in such a manner that assigns tasks to each computing node in proportion to its processing rate. Hence, the faster the processing rate of the computing node is, the greater the probability of assigning tasks to it is. Allocating tasks to computing node $j$ with PS algorithm, $MU_i$ can employ following expression to calculate the task allocation probability.
\begin{equation}\label{equ5:34}
  \rho_{ij}=\frac{\mu_{ij}^0}{\sum_{j=1}^{H}{\mu_{ij}^0}}.
\end{equation}

With respect to the global optimal scheme, it is a centralized method, and its objective is to minimize the expected response time for tasks in our experiments. However, taking the traffic delay of tasks into account, it is significantly difficult to achieve the optimal solution with GOS in our problem. For simplification, the transmitting delay is omitted when we perform our experiments using GOS.

\subsection{Configurations}
In our simulation experiments, assume that there are six edge cloud nodes in a fixed physical field. The average processing rate of each edge cloud node follows an exponential distribution, which is elaborated in Table \ref{table5:2}.

\begin{table}[htbp]
\centering  
\caption{The average processing rate of edge cloud nodes} \label{table5:2}
\begin{tabular}{>{\centering\arraybackslash}m{80pt} >{\centering\arraybackslash}m{15pt} >{\centering\arraybackslash}m{15pt} >{\centering\arraybackslash}m{15pt} >{\centering\arraybackslash}m{15pt} >{\centering\arraybackslash}m{15pt} >{\centering\arraybackslash}m{15pt}}
\toprule[1.0pt]
Edge cloud nodes &1 &2 &3 & 4&5&6\\
\hline
Average processing rate (tasks/s) &80 &60 & 100 &160 &90&70\\
\toprule[1.0pt]
\end{tabular}
\end{table}

For the task transmitting delay $L_{ij}$, we configure it as a random number distributed evenly in the interval [5ms, 100ms]. The maximum acceptable response time $\mathcal{T}_i^{max}$ for $MU_i$, $i \in {1,\cdots,n}$, is randomly generated in the interval [200ms, 280ms]. Additionally, the number of mobile users in our experiments is set to 20, the task arrival rate of each user equals to the product of her scaling factor $f_i$ and the aggregated task arrival rate $\lambda$. For example, the task arrival rate of $MU_i$ can be written as $${\lambda _i} = \lambda {f_i}.$$ The scaling factor $f_i$, $i= 1,\cdots,n$, and the local processing rate for each mobile user are given  in Table \ref{table5:3}.
\begin{table}[htbp]
\centering  
\caption{Scaling factors $f_i$ and the local processing rate for each user} \label{table5:3}
\begin{tabular}{>{\centering\arraybackslash}m{80pt} >{\centering\arraybackslash}m{15pt} >{\centering\arraybackslash}m{15pt} >{\centering\arraybackslash}m{20pt} >{\centering\arraybackslash}m{20pt} >{\centering\arraybackslash}m{20pt}}
\toprule[1.0pt]
Mobile users&1-3 &4-7 &8-12 &13-17 &18-20 \\
\hline 
$f_i$ &0.02 &0.035 &0.04 &0.06 &0.1 \\
Average processing rate (tasks/s) &3.6 &4.8 &5.2 &6 &7 \\
\bottomrule[1.0pt]
\end{tabular}
\end{table}

\subsection{Convergence estimation}
Since the game method proposed by us needs multiple iterations to reach the Nash equilibrium, the convergence rate of the algorithm becomes a significant metric estimating its performance. In this subsection, we will estimate the convergence rate of our methods on two aspects: acceptable deviate $\xi$ and the load rate of the system. The experimental parameters consist to the configuration provided above.

For comparison, two types of system initial scenarios are considered in our simulation experiments. The first is the \emph{initial\_0} in which the computing nodes are not assigned any task; the second is the \emph{initial\_P} in which the tasks have been assigned to each computing node in proportion to their processing rates.

Firstly, given an acceptable deviate $\xi=0.001$, which means the stop condition of our method and the arrival of the Nash equilibrium, we estimate the convergence rates of our algorithm on different system utilization rates, which are varied in the interval [0.1,0.7] with step 0.1. Moreover, the transmitting delay and the maximum acceptable response time for users are generated randomly in our experiments, hence, each experiment on different system utilization rates is carried out 100 times and its average results is described in the Fig. \ref{figure5:3}.

\begin{figure}[!ht]
  \centering
  \includegraphics[scale=1]{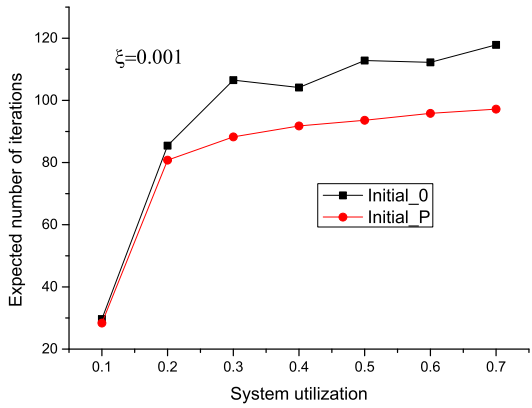}
  \caption{System utilization rate and convergence rate}
  \label{figure5:3}
\end{figure}

Fig. \ref{figure5:3} shows that given the convergence condition $\xi=0.001$, the average number of iterations of DITOA increases with the increase of the system utilization rate. The convergence rates varies evenly after the system utilization rate reaches 0.2. Furthermore, we can observe that the average number of iterations in the case of \emph{initial\_P} is lower than that of \emph{initial\_0} in Fig. \ref{figure5:3}. Generally, the task distribution obeying the case of \emph{initial\_P} is consist with the real scenario where the initial state of task allocation on computing nodes is proportional to their processing rate, so the experimental results demonstrate the effectiveness of our schemes.

Next, given the system utilization rate is set to 0.5, we estimate the convergence rate of our methods on distinct acceptable deviates. Similar to the above experiments, we run the experiments 100 times and then calculate the average number of iterations as experimental results. Fig. \ref{figure5:4} shows that the convergence rate of \emph{initial\_P} is prior to that of \emph{initial\_0}. It implies that the system in the case of \emph{initial\_P} is more close to the Nash equilibrium. Additionally, while the acceptable deviate is set to 0.01, the average number of the iteration of our algorithm is 18. The above experiment results demonstrate that the algorithm proposed by us can converge quickly in an acceptable deviate range.

\begin{figure}[!ht]
  \centering
  \includegraphics[scale=1]{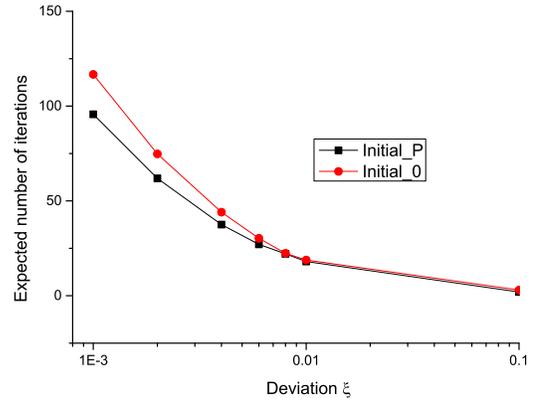}
  \caption{Acceptable deviate and convergence rate}
  \label{figure5:4}
\end{figure}

\subsection{Estimation of average response time}
In this subsection, we will estimate the average response time of user tasks by varying the utilization rate of the system. The parameters of simulation experiments are given in terms of previous experiment configurations. Considering that the transmitting delay and the acceptable deviate are set randomly, similar to the above experiments, each simulation experiment is also carried out 100 times. The convergence condition (acceptable deviate) $\xi$ is set to 0.001 in the experiments.
\begin{figure}[!ht]
  \centering
  \subfigure [DITOA]{
  \includegraphics[scale=1.0]{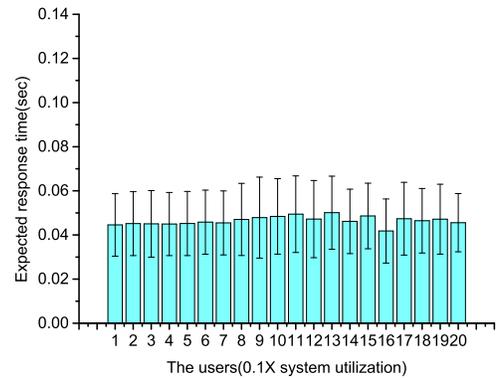}}
  \subfigure [PS]{
  \includegraphics[scale=1.0]{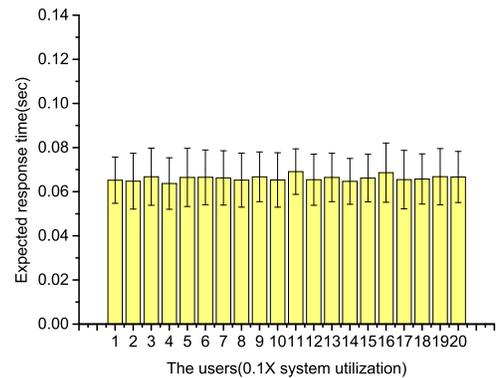}}
  \caption{~0.1X~system utilization rate and the average response time of user tasks}
  \label{figure5:5}
\end{figure}

\begin{figure}[!ht]
  \centering
  \subfigure [DITOA]{
  \includegraphics[scale=1.0]{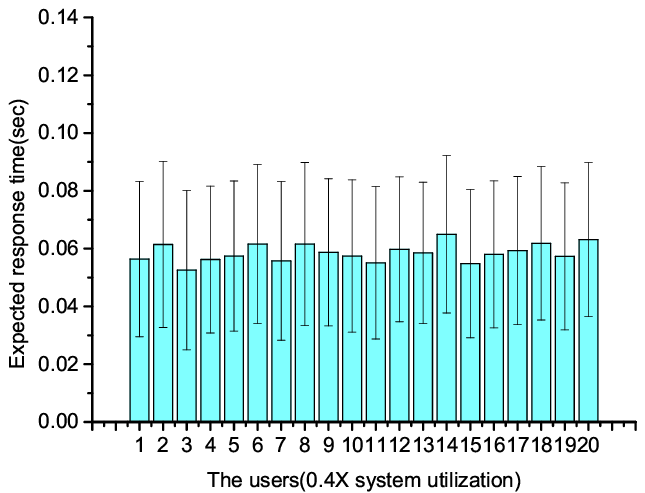}}
  \subfigure [PS]{
  \includegraphics[scale=1.0]{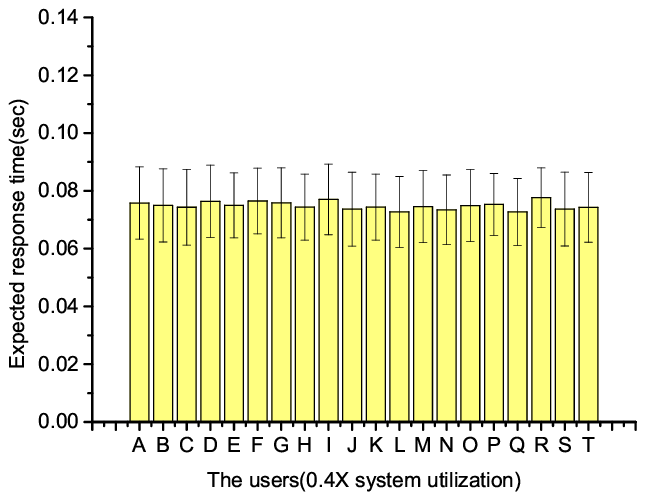}}
  \caption{~0.4X~system utilization rate and the average response time of user tasks}
  \label{figure5:6}
\end{figure}

\begin{figure}[!ht]
  \centering
  \subfigure [DITOA]{
  \includegraphics[scale=1.0]{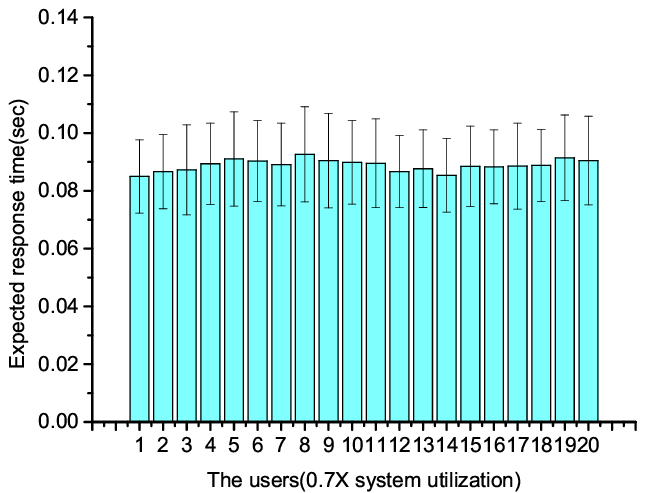}}
  \subfigure [PS]{
  \includegraphics[scale=1.0]{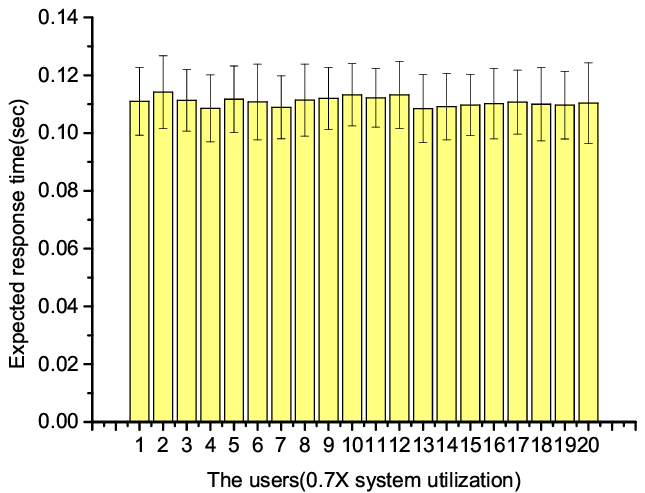}}
  \caption{~0.7X~system utilization rate and the average response time of user tasks}
  \label{figure5:7}
\end{figure}

Figs. \ref{figure5:5}~,~\ref{figure5:6}~and~\ref{figure5:7} present the average task response time with standard deviate for PS and DITOA methods, when the system utilization rates are at 0.1X, 0.4X and 0.7X respectively. The experimental results show that the expected task response time of our method is lower than that of PS method at distinct system utilization rates. The main reason is that the strategy employing by users in our method is optimal for individuals, rather than PS method is not optimal for individuals due the task allocation in proportion to the processing rates of computing nodes. Accordingly, it makes the performance of our method is superior to that of PS.

According to the previous configurations, GOS method does not need to be executed repeatedly because its transmitting delay is omitted. Furthermore, the result yielded by GOS is the optimal average response time of the whole system. Hence, GOS is not included in the above experiments. It is obvious that the task response time of each user in PS and DITOA is different. To compare PS, DITOA and GOS, we adopt the following expression to normalize their overall average response time of the tasks.
\begin{equation}\label{equ5:35}
  \frac{1}{\sum_{i=1}^{n}{\hat{\lambda}_i}}\sum_{i=1}^{n}{\hat{\lambda}_iT_i(\rho_i,\rho_{-i})}.
\end{equation}

Fig. \ref{figure5:8} describes the overall average response time of the task for the three methods at distinct system utilization rates. It can been seen that the task response time of GOS is the lowest, and PS is the highest. The primary reason is that the transmitting delay is ignored in GOS method. Specially, while the system utilization rate is low, the task processing time on the computing node is far less than the task transmitting delay in DITOA and PS. Obviously, GOS is superior to the other two methods. Moreover, as the increase of the system utilization rate, the difference between the processing time and the transmitting delay decreases, and the performance of DITOA approaches that of GOS. In some senses, GOS can be regarded as the optimization benchmark in our experiments. Therefore, the experimental results validate the effectiveness of our methods.
\begin{figure}[!ht]
  \centering
  \includegraphics[scale=1.0]{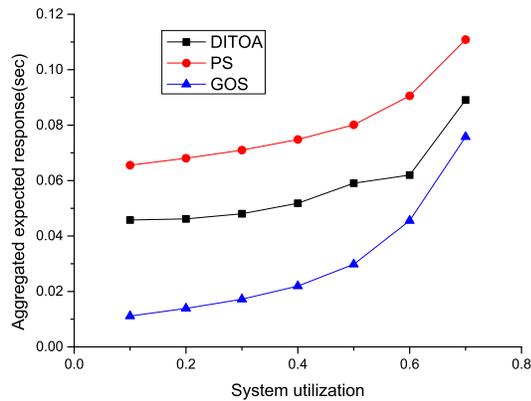}
  \caption{System utilization rate and the overall average response time of the task}
  \label{figure5:8}
\end{figure}

\section{Conclusion}
In this paper, we investigated the task offloading problem in the mobile edge cloud computing. We employ NBS approach to compute the mobile user utility, and then present a non-cooperative game framework and associated algorithm to address the problem, such that each mobile user can minimize the expected response time of tasks in a compromise scenario (approaching a Nash equilibrium point). The presented algorithm has relatively low complexity and distribution execution characteristic, and thus, can be easily implemented to improve the reliability and robustness of the system. The effectiveness of our approach was assessed by performing simulated experiments. The experimental results demonstrated that our approach could outperform alternative popular methods achieving better performance. Besides, our proposed approach can be applied to other resource allocation models.

In the future, we plan to explore the VM or container migration among multiple mobile edge clouds as an extension of our work. We are also interested in implementing a prototype allocation system in an experimental edge cloud platform to further study the performance of our proposed approach.


%



\ifCLASSOPTIONcompsoc
  \section*{Acknowledgments}
   This work was partially supported by the National Natural Science Foundation of China (No. 61173107, 91320103, 61672215, U1613209), National High-tech R\&D Program of China (863 Program) (No. 2012AA01A301-01),the Special Project on the Integration of Industry, Education and Research of Guangdong Province, China (No.2012A090300003), Science research project of Department Education of Hunan Province, China (No.2016C0270), Social Science Foundation of Hunan Province, China (No. 16YBA050) and the Science and Technology Planning Project of Guangdong Province, China (No.2013B090700003). The correspondence author is Zhiyong Li.

 \else
  \section*{Acknowledgment}
\fi


\ifCLASSOPTIONcaptionsoff
  \newpage
\fi



\bibliographystyle{IEEEtran}
\end{document}